\documentclass{amsart}

\usepackage{float}
\usepackage{color}
\usepackage{xypic}
\usepackage{enumerate}
\usepackage{tikz}
\usetikzlibrary{calc}

\def\k{\mathfrak k}      
\def\n{\mathfrak n}      

\def\r{\mathfrak r}

\def\c{\mathbf c}








\def\Re{\mathrm{Re}}  


\newtheorem{theorem}{Theorem}[section]
\newtheorem{lemma}{Lemma}
\newtheorem{corollary}{Corollary}
\newtheorem{proposition}{Proposition}
\theoremstyle{conjecture}

\theoremstyle{definition}

\theoremstyle{question}
\newtheorem*{question}{Question}
\theoremstyle{questions}

\theoremstyle{examples}
\newtheorem*{examples}{Examples}

\theoremstyle{remark}
\newtheorem{remark}{Remark}
\theoremstyle{remarks}
\newtheorem*{remarks}{Remarks}
\theoremstyle{example}

\numberwithin{equation}{section}

\begin{document}

\title{Conversations with Flaschka: Kac-Moody groups and Verblunsky coefficients}

\author{Mohammad Javad Latifi}
\email{mohammad.javad.latifi.jebelli@dartmouth.edu}
\address{Department of Mathematics, Dartmouth College, 
Hanover, NH 03755}

\author{Doug Pickrell}
\email{pickrell@math.arizona.edu}

\address{Mathematics Department, University of Arizona, Tucson, AZ 85721}

\begin{abstract}In this paper we discuss two items which in one way or another originated from conversations
with Hermann Flaschka and his students. The first is an application of the Toda lattice to the question of
whether there exists a complex Lie group (with certain properties) associated to an indefinite type Kac-Moody Lie algebra. The second concerns a new example of the Verblunsky correspondence.
\end{abstract}
\maketitle

\setcounter{section}{-1}

\section{Preface}

This paper is dedicated to the memory of Hermann Flaschka. Hermann made seminal contributions
to completely integrable systems and many related areas of mathematics, reflecting his broad interests.
He was beloved as a colleague
and friend for his humility, wisdom and wit. His intensity in pursuit of facts and solutions, whether related
to mathematical problems or raises for adjuncts or climate change, was legendary.

Many decades ago, after a sanitized introduction to abstract mathematics, I (the second author)
took a course from Hermann which emphasized explicit formulas for tau functions and solitons. This was a formative experience. A decade ago Hermann, Bole Yang (Hermann's last PhD student) and I spent
a semester learning about the Verblunsky correspondence, CMV matrices and the discrete nonlinear Schroedinger
equation, using the analogy with Jacobi matrices and the Toda lattice as a guide.

\section{Introduction}

The goal of this paper is to illustrate some ideas which in one way or another originated from some conversations
with Hermann Flaschka and his students.

\subsection{Kac-Moody Groups}

Kac-Moody algebras fall into three classes: finite dimensional,
e.g. $sl(n,\mathbb C)$, which is the setting for Flaschka's original reformulation of the nonperiodic
Toda equations; affine, e.g. $sl(n,\mathbb C[z,z^{-1}]))$ (matrices with finite Laurent series coefficients),
which for example are relevant to the periodic Toda equations; and indefinite.
Kac and Peterson associated an `algebraic group' to a general Kac-Moody algebra. For the affine and indefinite classes, this algebraic group lacks
an exponential map, hence it is natural to ask whether there exists a Lie group (with an exponential map)
which is associated to a completion of the algebra, e.g. $C^0$ or $C^1$ or $C^{\infty}(S^1,SL(n,\mathbb C))$ in the affine case. Toda equations can be associated to the generalized Cartan matrix for any Kac-Moody algebra,
and in conventional cases solving the equations involves exponentiation and LDU type factorization. Our main point is that properties of the Toda equation solutions are intimately related to the question of whether
there exists a complex Lie group with exponential map associated to a given Kac-Moody algebra. The dissertation related work of Hermann's student Maorong Zou suggests that there does not exist a complex Lie group, having certain properties, in the indefinite cases, but to our knowledge there does not exist a definitive theorem.

\begin{remark}In his work on moonshine conjectures and the Monster finite simple group, Borcherds enlarged an indefinite type Lie algebra to the so called Monster Lie algebra (see \cite{Goddard} for a short introduction).
For recent work on the question of how to associate groups to this algebra and others, see \cite{CJM} and the references.
\end{remark}

\subsection{The Verblunsky Correspondence}

Let $Prob(S^1)$ denote the metrizable compact convex set of probability measures on
the unit circle $S^1$, let $Prob'(S^1)$ denote the subset
consisting of measures with infinite support, and let $\Delta:=\{z\in\mathbb C:|z|<1\}$. The
Verblunsky correspondence refers to a miraculous homeomorphism
\begin{equation}\label{verblunskymap}Prob'(S^1) \leftrightarrow
\prod_{n=1}^{\infty}\Delta: \mu \leftrightarrow \alpha\end{equation}
or more generally to a homeomorphism of $Prob(S^1)$ and a compactification of the product space.
This correspondence can be described in several different ways. One formulation is as follows.
Given $\mu\in Prob'(S^1)$, if $p_0 =z^0$, $p_1(z)$, $p_2(z)$,... are the monic orthogonal polynomials corresponding
to $\mu$, then $\alpha_n =p_{n}(0)^*$ (where $(\cdot)^*$ is complex conjugation). Conversely, given $\alpha$, $\mu$ is the weak* limit of probability measures
\begin{equation}\label{verblunskymap2}d\mu= \lim_{N\to\infty}
\frac{\prod_{n=1}^{N}(1-\vert \alpha_n\vert^2)}{\vert p_N(z)\vert^2}\frac{d\theta}{2\pi}\end{equation}
where $p_0=1$,
\begin{equation}\label{Szego} p_{n}(z) = zp_{n-1}(z)+\alpha_n^*z^{n-1}p_{n-1}^{*}(z),\quad  n >0,\end{equation} and
$p_n^*(z)=p_n(\frac1{z^*})^*$.

\begin{remarks} (a) This correspondence is truly `miraculous' because it yields a parametrization of (nontrivial) measures on the circle by parameters which are functionally independent - this has consequences, see e.g. \cite{LP}.

(b) For a masterful survey of the history and ramifications of the Verblunsky theory,
see Simon's two volume work, \cite{Simon1}, or for a short survey, see \cite{Simon2}. For more recent connections with completely integrable systems, see \cite{Nenciu} and \cite{Yang}.
\end{remarks}

There are a number of known explicit examples of this correspondence, see section 1.6 of Volume I of \cite{Simon1}, including the case $\beta=1$ of the following

\begin{theorem} \label{thm1} Suppose that $\beta>0$. The probability measure $\mu=\frac{1}{\mathfrak Z}(1-cos(\theta))^{1/\beta}\frac{d\theta}{2\pi}$ corresponds to the sequence
$\alpha_n=\frac{1}{1+n\beta}$, $n=1,2,...$.
\end{theorem}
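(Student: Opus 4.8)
The plan is to identify the monic orthogonal polynomials of $\mu$ explicitly as terminating Gauss hypergeometric series and then simply read off their constant terms. Since $1-\cos\theta=\tfrac12|1-e^{i\theta}|^2$, I set $\lambda:=1/\beta>0$ and rewrite $\mu=\mathfrak Z^{-1}2^{-\lambda}|1-e^{i\theta}|^{2\lambda}\frac{d\theta}{2\pi}$, a Fisher--Hartwig (circular Jacobi) weight. Because $\mu$ is invariant under $\theta\mapsto-\theta$, the monic orthogonal polynomials $p_n$ have real coefficients and the Verblunsky coefficients are real; hence, using $\alpha_n=p_n(0)^*$, it suffices to prove that the $n$-th monic orthogonal polynomial satisfies $p_n(0)=\frac{\lambda}{n+\lambda}$, which is exactly $\frac{1}{1+n\beta}$. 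First I would record the normalized moments $c_k:=\int_0^{2\pi}e^{-ik\theta}\,d\mu$. A binomial expansion of $(1-e^{i\theta})^\lambda(1-e^{-i\theta})^\lambda$ together with the Vandermonde sum gives $\int_0^{2\pi}e^{-ik\theta}|1-e^{i\theta}|^{2\lambda}\frac{d\theta}{2\pi}=(-1)^k\frac{\Gamma(2\lambda+1)}{\Gamma(\lambda+1+k)\Gamma(\lambda+1-k)}$ (this also evaluates $\mathfrak Z$), and dividing by the $k=0$ value yields the clean form $c_k=\frac{(-\lambda)_{|k|}}{(\lambda+1)_{|k|}}$, where $(a)_m$ is the Pochhammer symbol.

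Next I would propose the closed form
\begin{equation}\label{candidate}
p_n(z)=\frac{(\lambda)_n}{(\lambda+1)_n}\,{}_2F_1\!\left(-n,\ \lambda+1;\ 1-n-\lambda;\ z\right),
\end{equation}
which is a polynomial of degree $n$ since the series terminates. Using $(-n)_n=(-1)^n n!$ and the identity $(1-n-\lambda)_n=(-1)^n(\lambda)_n$, the coefficient of $z^n$ in \eqref{candidate} is $1$, so this $p_n$ is indeed monic, while its value at the origin is the prefactor $\frac{(\lambda)_n}{(\lambda+1)_n}=\frac{\lambda}{n+\lambda}$. Thus the entire theorem is reduced to verifying that \eqref{candidate} is the genuine monic orthogonal polynomial, i.e.\ that $\langle p_n,z^m\rangle=0$ for $0\le m<n$, where $\langle f,g\rangle=\int f\bar g\,d\mu$.

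The crux, and the step I expect to be the main obstacle, is this orthogonality verification. Writing $p_n(z)=\sum_{j=0}^n a_{n,j}z^j$ with $a_{n,j}\propto\frac{(-n)_j(\lambda+1)_j}{(1-n-\lambda)_j\,j!}$, orthogonality to $z^m$ becomes the scalar identity $\sum_{j=0}^{n}\frac{(-n)_j(\lambda+1)_j}{(1-n-\lambda)_j\,j!}\,c_{j-m}=0$. The difficulty is the absolute value in $c_{j-m}=\frac{(-\lambda)_{|j-m|}}{(\lambda+1)_{|j-m|}}$, which forces a split of the sum at $j=m$. I would handle each piece as a terminating Gauss series, recognizing the two contributions as balanced (Saalsch\"utzian) hypergeometric sums evaluable by the Chu--Vandermonde and Pfaff--Saalsch\"utz summations, and show the two evaluations cancel for every $m<n$. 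A robust alternative is to pass to the contour integral $\oint p_n(z)\,z^{-m-1-\lambda}(1-z)^{2\lambda}\,dz$ and evaluate it with a keyhole contour around the branch cut of $(1-z)^{2\lambda}$, reducing orthogonality to a beta-integral identity; a third option is to verify the Szeg\H{o} recursion $p_n=zp_{n-1}+\tfrac{\lambda}{n+\lambda}\,z^{n-1}p_{n-1}^*$ directly from \eqref{candidate} via contiguous relations for ${}_2F_1$.

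Once orthogonality is established, the conclusion is immediate: $p_n(0)=\frac{\lambda}{n+\lambda}$ is real, so $\alpha_n=p_n(0)^*=\frac{\lambda}{n+\lambda}=\frac{1}{1+n\beta}$, as claimed. As an independent consistency check one can feed these $\alpha_n$ into the reconstruction formula \eqref{verblunskymap2}: the products $\prod_{n=1}^N(1-\alpha_n^2)$ telescope against $|p_N(e^{i\theta})|^2$, and a Szeg\H{o}-asymptotics argument recovers the weight $(1-\cos\theta)^{1/\beta}$ in the weak-$*$ limit. The value $\beta=1$ reproduces the known entry in section~1.6 of \cite{Simon1}, which serves as a sanity check on the normalization and sign conventions.
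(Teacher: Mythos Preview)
Your plan is viable and takes a genuinely different route from the paper. You work from the measure to the coefficients via explicit OPUC: recognizing $\mu$ as a circular Jacobi weight, you write the monic orthogonal polynomials as terminating ${}_2F_1$'s and read off $p_n(0)=\lambda/(n+\lambda)$. The orthogonality step you flag as the crux is indeed where the work lies, but any of your three strategies will close it (the Szeg\H{o}-recursion check via contiguous relations for ${}_2F_1$ is probably the cleanest, and this family is classical). The paper instead goes from $\alpha$ to $\mu$: starting with $\alpha_n=1/(1+n\beta)$ it forms the ordered product $g_2$ of $SU(1,1)$ factors, identifies the $n$th Taylor coefficient of $\gamma_2+\delta_2$ with the left side of the super-telescoping identity (\ref{niceidentity}), sums to $(1-z)^{-m}$, and then invokes Proposition~\ref{Verblunsky} to obtain $d\mu=\prod(1-|\alpha_k|^2)\,|1-z|^{2m}\,d\theta/2\pi$, with the normalizing constant fixed by a beta integral. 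Your approach is more elementary and self-contained once the hypergeometric summation is in hand, and it delivers the monic orthogonal polynomials explicitly---something the paper remarks it lacks. The paper's route, on the other hand, ties the example to the loop-group/root-subgroup themes of Section~\ref{Verblunsky} and makes the combinatorial identity (\ref{niceidentity}) the computational heart, rather than a Pfaff--Saalsch\"utz evaluation.
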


This example illuminates the relation between decay properties of the coefficients and regularity properties of the corresponding measures, which we will highlight in the text.

It is a simple matter to use the definition of the Verblunsky coefficients to compute a few of the coefficients; presumably we could have discovered this example using these calculations.  However for us this arose from mulling the following formula,
\begin{equation}\label{niceidentity}\sum_{i,j} \prod_{u=1}^L\frac{1}{(i(u)\beta+1)(j(u)\beta+1)}=\prod_{k=1}^{n}(\frac{1}{k}\beta^{-1}+\frac{k-1}{k})  \end{equation}
where the sum is over all integral sequences $i(1)>j(1)>...>i(L)>j(L)\ge 0$ satisfying $\sum_{u=1}^L(i(u)-j(u))=n$,
for some $1\le L\le n$ (see \cite{LP}). This leads to `super-telescoping formula' (see \cite{Latifi_Lattice})

\begin{equation}\sum_{n=0}^{\infty} \left( \sum_{i,j} \prod_{u=1}^L\frac{1}{(i(u)\beta+1)(j(u)\beta+1)} \right) z^n= \frac{1}{(1-z)^{1/\beta}}
\end{equation}

It turns out that inverting the Verblunsky map is related to `root subgroup factorization' (for an introduction to root subgroup factorization, see \cite{BP}). Theorem \ref{thm1}
is essentially a corollary of the following

\begin{theorem}\label{thm2} Suppose that $\alpha_n=1/(1+n\beta)$ as before and let $m=1/\beta$. Then
\begin{equation}\lim_{n\to\infty}\left(\begin{matrix} 1&\alpha_nz^{-n}\\
\bar{\alpha}_nz^n&1\end{matrix} \right)...\left(\begin{matrix} 1&
\alpha_1z^{-1}\\
\bar{\alpha}_1z&1\end{matrix} \right)=\left(\begin{matrix} \frac{1+F_m^*(z)}{2(1-z^*)^m}&\frac{1-F_m^*(z)}{2(1-z^*)^m} \\\frac{1-F_m(z)}{2(1-z)^m}&\frac{1+F_m(z)}{2(1-z)^m}\end{matrix}\right) \end{equation}
where
$$F_m(z)=1+\sqrt{\pi}2^{1-m}\frac{\Gamma(m+ 1)}{\Gamma(m+\frac 12)}\sum_{n=1}^{\infty}\sum_{k=0}^{\infty}\left(\begin{matrix}m\\n+2k
\end{matrix}\right)\left(\begin{matrix}n+2k\\k\end{matrix}\right)(\frac{-1}{2})^{n+2k}z^{n} $$
(If $m$ is an integer, this is a polynomial; in general the binomial coefficient is defined in terms
of gamma functions).
\end{theorem}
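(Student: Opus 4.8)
The plan is to study the finite products $P_n:=A_n\cdots A_1$ (the left-most factor carrying $z^{-n}$) and pass to the limit. Several reductions come first. Since each $\alpha_k=1/(1+k\beta)$ is real, $\det A_k=1-\alpha_k^2$, so $\det P_\infty=\prod_{k\ge1}(1-\alpha_k^2)=:\mathcal C$ is a convergent positive constant. With $J=\left(\begin{smallmatrix}0&1\\1&0\end{smallmatrix}\right)$ one checks $A_k(1/z)=JA_k(z)J$, hence $P_n(1/z)=JP_n(z)J$; writing the limiting entries as $a_\infty,b_\infty,c_\infty,d_\infty$ this gives $a_\infty(z)=d_\infty(1/z)$ and $b_\infty(z)=c_\infty(1/z)$. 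Moreover $\overline{P_n(z)}=P_n(\bar z)$, so all entries have real Laurent coefficients. These symmetries reduce the whole statement to the bottom row, and in fact to the two scalar identities $c_\infty+d_\infty=(1-z)^{-m}$ and $(d_\infty-c_\infty)(1-z)^m=F_m$.

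Next I would establish convergence and a combinatorial description of the entries. Because $\alpha_k\sim (k\beta)^{-1}$ we have $\sum\alpha_k^2<\infty$, which is what forces the coefficientwise (and locally uniform, $|z|<1$) convergence of $d_\infty$ and $c_\infty$. Expanding $[P_\infty]_{22}$ and $[P_\infty]_{21}$ as a sum over transition patterns — at each factor $A_k$ one selects either a diagonal $1$ or an off-diagonal $\alpha_kz^{\pm k}$, with the two states matching up — produces exactly the sequences $i(1)>j(1)>\cdots>i(L)>j(L)$ and weights $\prod_u\alpha_{i(u)}\alpha_{j(u)}$ appearing in \eqref{niceidentity}, the only difference being that here every index is $\ge 1$.

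Now I would exploit the super-telescoping formula to get the symmetric combination. Append one optional extra transition at a virtual index $0$ with weight $\alpha_0:=1$: the two cases (this transition absent or present) contribute precisely $d_\infty$ and $c_\infty$, and together they sweep out all admissible patterns with $j(L)\ge0$. Thus $c_\infty+d_\infty$ is exactly the generating series of \eqref{niceidentity}, and the super-telescoping formula evaluates it:
\[
c_\infty(z)+d_\infty(z)=\frac{1}{(1-z)^{m}}=\frac{1-F_m}{2(1-z)^m}+\frac{1+F_m}{2(1-z)^m},
\]
which matches the claimed bottom row modulo the antisymmetric piece.

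It remains to compute $d_\infty-c_\infty$, and this is the crux. Set $F:=(d_\infty-c_\infty)(1-z)^m$; by the above it is analytic in $\Delta$ with $F(0)=1$. Feeding $c_\infty+d_\infty=(1-z)^{-m}$ together with $d_\infty(1/z)=a_\infty$, $c_\infty(1/z)=b_\infty$ into $\det P_\infty=\mathcal C$, a short computation yields $F(z)+F(1/z)=2\mathcal C\,(1-z)^m(1-1/z)^m$, so on $S^1$ one gets $\Re F\propto(1-\cos\theta)^m$. Then $F$ is the unique function analytic in $\Delta$ with this boundary real part and $F(0)=1$, i.e.\ its Herglotz transform; expanding the Herglotz kernel and computing the Fourier coefficients of $(1-\cos\theta)^m$ by the binomial theorem produces exactly the stated double series, giving $F=F_m$ and, via the $z\mapsto 1/z$ symmetry, the full matrix. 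The hard part is precisely this last step: obtaining the explicit closed form of $F_m$ (equivalently, evaluating the signed ``$\alpha_0=-1$'' sum $d_\infty-c_\infty$, for which the super-telescoping identity no longer applies directly) and supplying enough boundary regularity of $d_\infty,c_\infty$ to invoke the Herglotz representation rigorously; the algebraic inputs (the determinant, the row-sum, the symmetry) are by comparison routine. Finally Theorem~\ref{thm1} drops out as a corollary, since the Carath\'eodory function of the measure is $F_m$, whence $d\mu\propto\Re F_m\,d\theta\propto(1-\cos\theta)^m\,d\theta$.
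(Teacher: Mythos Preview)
Your outline is correct and closely parallels the paper's argument. Both routes hinge on the same two computations: the super-telescoping identity gives $c_\infty+d_\infty=(1-z)^{-m}$, and one then identifies $\Re F$ on $S^1$ with a constant times $(1-\cos\theta)^m$, from which $F_m$ is read off as the Herglotz/Carath\'eodory extension. The paper uses Chebyshev expansions for this last step; your binomial computation is equivalent (indeed the paper itself redoes it that way when computing moments).

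The one genuine organizational difference is how you obtain $\Re F$. The paper invokes the Verblunsky machinery (its Proposition on $d\mu=\mathcal C\,|\gamma_2+\delta_2|^{-2}\,d\theta/2\pi$ and the identification of $F$ with the Carath\'eodory function of $\mu$), so that Theorem~\ref{thm1} is obtained first and Theorem~\ref{thm2} follows via the algebraic lemma $\delta=\tfrac12(1+F)(\delta+\gamma)$. You instead derive $\Re F$ directly from $\det P_\infty=\mathcal C$ together with the $z\mapsto1/z$ symmetry, which is more self-contained and avoids appealing to the Verblunsky correspondence as a black box; Theorem~\ref{thm1} then drops out at the end. Your determinant computation is correct: with $s=c_\infty+d_\infty$, $t=d_\infty-c_\infty$ one has $d(1/z)d(z)-c(1/z)c(z)=\tfrac12\bigl(s(1/z)t(z)+t(1/z)s(z)\bigr)$, and substituting $s=(1-z)^{-m}$ yields $F(z)+F(1/z)=2\mathcal C\,(1-z)^m(1-1/z)^m$ as you claim.

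The caveat you already flag is the right one: since $(\alpha_n)$ is not summable, the product does not converge in $C(S^1)$ and $c_\infty+d_\infty=(1-z)^{-m}$ need not lie in $H^2$ for $m>1/2$, so passing the determinant identity to the limit and invoking the Herglotz representation require care. The paper is equally informal here (its Proposition is stated under absolute summability and then applied beyond that hypothesis). One clean way to close this for integer $m$ is to note that your relation $F(z)+F(1/z)=2\mathcal C(1-z)^m(1-1/z)^m$ is an identity of Laurent \emph{polynomials} at each finite stage $P_n$ (up to an $O(z^{\pm(n+1)})$ tail) and hence holds coefficientwise in the limit; for general $m$ one can match Fourier coefficients of $2^m\mathcal C(1-\cos\theta)^m$ directly, which is exactly the computation you propose.
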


One interesting thing to observe is that this infinite product is not bounded, reflecting the non-summability of the coefficients. This has consequences.

\subsection{Plan of the Paper}

The prerequisites for Section \ref{KMtheory} are steep. Our main goals are to given an accessible and motivated introduction to questions about groups associated to Kac-Moody algebras, and to explain why properties of generalized Toda equations are inextricably linked to questions about Kac-Moody groups.

Section \ref{Verblunsky} is more straightforward. Our main purpose is to prove Theorems \ref{thm1} and \ref{thm2},
and some related calculations.

\subsection{Notation}\label{notation}
If $f(z)=\sum f_nz^n$, then we will write
$$f=f_-+f_0+f_+$$
where $f_-(z)=\sum_{n<0}f_nz^n$ and
$f_+(z)=\sum_{n>0}f_nz^n$, $(f)_{0+}=f_0+f_+$, and $f^*(z)=\sum (f_{-n})^*z^n$,
where $w^*=\bar{w}$ is the complex
conjugate of the complex number $w$.
If the Fourier series is convergent at a point $z\in S^1$, then $f^*(z)$ is the conjugate of the complex number $f(z)$.  If $f \in
H^0(\Delta)$, then $f^* \in H^0(\Delta^*)$, where $\Delta$ is the
open unit disk, $\Delta^*$ is the open unit disk at $\infty$, and
$H^0(U)$ denotes the space of holomorphic functions for a domain $U\subset \mathbb C$.

Our conventions regarding Verblunsky coefficients differ from those in \cite{Simon1}, where the nontrivial coefficients are indexed by $n=0,1,...$.
We use the convention that $\alpha_0=1$ and the nontrivial Verblunsky coefficients are $\alpha_1,\alpha_2,...$.

\section{An Application of Toda Lattice Theory to Kac-Moody Groups}\label{KMtheory}

Chevalley and Serre famously discovered that a finite dimensional simple complex
Lie algebra can be defined in terms of generators and relations using
a  Cartan matrix $A$, i.e. an integral $n\times n$ matrix having
positive principal minors and
satisfying $A_{ii}=2$, $A_{ij}\le 0$  and $A_{ij}=0 \implies A_{ji}=0$
for $i\ne j$. Kac and Moody independently deleted the condition that
$A$ has positive principal minors, and discovered that the theory remained robust
(one can also allow $n$ to be infinite, which is essential in Borcherds
work on the Monster). The corresponding Kac-Moody algebra
$\mathfrak g=\mathfrak g(A)$ is
generated by $3n$ letters $e_i,$ $f_i,$ and $h_i$ $(1\le i<n+1)$ satisfying the
Chevalley-Serre relations \newline
\centerline{$[h_i,h_j]=0,$ $[e_i,f_i$ $]=h_i$, $[e_i$ $,f_j$ $]=0
,$     $i\ne j$}
\centerline{$[h_i$ $,e_j]=A_{ij}e_j$, $[h_i,f_j]=-A_{ij}f_j,$}
\centerline{$(ad(e_i))^{1-A_{ij}}e_j=0,$ $(ad(f_i))^{1-A_{ij}}f_j
=0,$     $i\ne j$   }
The main take away is this: for each $i=1,...,n$, there is an isomorphism
$sl(2,\mathbb C)\leftrightarrow span(f_i,h_i,e_i)$,
\begin{equation}\label{sl2c} \left(\begin{matrix}0&0\\1&0\end{matrix}\right) \leftrightarrow f_i,\quad \left(\begin{matrix}1&0\\0&-1\end{matrix}\right) \leftrightarrow h_i,\quad\left(\begin{matrix}0&1\\0&0\end{matrix}\right) \leftrightarrow e_i
\end{equation}
and the other relations detail how these $n$ copies of $sl(2,\mathbb C)$ (the `root subalgebras' corresponding to simple roots) interact.
A technical aside: We will assume that $A$ is irreducible and
symmetrizable, meaning that there exists an essentially unique
nondegenerate symmetric bilinear form on $\mathfrak g$, denoted $\langle \cdot,\cdot\rangle$.

As in the finite dimensional case, the Lie algebra $\mathfrak g$ has a triangular decomposition\newline
\centerline{$\mathfrak g=\mathfrak n^{-}\oplus \mathfrak h\oplus \mathfrak n^{+}$}
where $\mathfrak n^{-}$ is generated by $\{f_j\}$ (think lower triangular matrices, or creation operators), $\mathfrak h$ by $
\{h_j\}$ (think diagonal matrices), and $\mathfrak n^{+}$
by $\{e_j\}$ (think upper triangular matrices, or annihilation operators).

As we noted in the introduction, Kac-Moody Lie algebras fall into three classes: (1) $\mathfrak g$ is finite dimensional, in which
case there exists a corresponding complex Lie group $G(A)$. (2) $\mathfrak g$ is of affine type, in which
case there exists a finite dimensional simple complex Lie algebra $\dot{\mathfrak g}$ such that
$\mathfrak g$ is a central extension of the loop algebra of functions $S^1 \to \dot{\mathfrak g}$
having finite Fourier series (or a twisted version of this); in this case there exist infinite dimensional
Lie groups corresponding to completions of $\mathfrak g$, e.g. (a central extension of)
the loop group $C^{\infty}(S^1,\dot{G}(A))$ (see \cite{PS}). And finally
(3) $\mathfrak g$ is of indefinite type.

\begin{question}In the indefinite case, does there exist a corresponding complex Lie group?
\end{question}

This is an interesting question, because for example in conventional situations one uses the
global group structure to solve associated differential equations, such as the Toda equations.

Kac and Peterson showed in general how to construct an `algebraic group' $G(A)$ which is associated to $\mathfrak g(A)$ (see \cite{Kac} and \cite{KP}). The basic idea is simple: $G(A)$ is generated by $n$ `root subgroups', i.e. subgroups isomorphic to $SL(2,\mathbb C)$ corresponding to the $n$ copies of $sl(2,\mathbb C)$ in (\ref{sl2c})). One imposes all the relations that arise from realizing these groups in a highest weight representation of the Lie algebra (meaning that the $f_i$ are creation operators and the $e_i$ are annihilation operators). The paper \cite{KP} is a masterful analysis of the essential relations, which in the words of the authors are 'analytic continuations of the relations defining the Weyl group', which are Coxeter group relations.

$G(A)$ is not a Lie group, except in the finite case. $G(A)$ is useful in formulating global algebraic notions, such as LDU factorization: for $g\in G(A)$, $g=ldu$,
where $l\in N^-$, $d\in H$, $u\in N^+$ (the groups corresponding to $\mathfrak n^-,\mathfrak h,\mathfrak n^+$, respectively) if and only if $\sigma_j(g)\ne 0$, where the $\sigma_j$ are the fundamental matrix coefficients, $\sigma_j(g)=\langle g\cdot v,v\rangle$ where $v$ is the vacuum for the $i$th fundamental highest weight representation (For $g\in SL(n,\mathbb C)$ this is the statement that $g$ has a LDU factorization iff the principal minors are nonsingular, and for $g\in SL(n,\mathbb C[z,z^{-1}])$ this is essentially the statement that g has a Riemann-Hilbert factorization (with factors having finite Laurent expansions, a nontrivial condition) iff the associated Toeplitz operator is invertible; in all cases $d=\prod_{j=1}^n \sigma_j(g)^{h_j}$, and - what is essential for us - the $l$ and $u$ factors are rational in any given representation). But the ordinary exponential function is transcendental, and $G(A)$ does not have an exponential map, except in the finite case. We would like to complete $G(A)$ so that it has an exponential map, while preserving its algebraic properties.

To formalize this, assume that there exists a complex Lie group $\overline G$ which has a Lie algebra $\overline{\mathfrak g}$ containing $\mathfrak g$ as a dense subalgebra. Without further comment we will denote the completion of $\mathfrak n^-$ by
$\overline{\mathfrak n^-}$ and so on. We will assume that $\overline G$ has the following properties. First, the form $\langle\cdot,\cdot \rangle$ extends continuously in each variable to $\overline {\mathfrak g}$; secondly,
the exponential map $exp:\overline{\mathfrak g} \to \overline{G}$ is well-defined and holomorphic; and thirdly, $\overline G$ has a Birkhoff stratification parameterized by the elements of the Weyl group $W=N(H)/H$, compatible
with, and having the same basic properties as, the Birkhoff stratification for $G(A)$. We will not spell out the third condition in full detail, but simply note $g$ will be in the top stratum $\Sigma_1$ (a dense open set) iff the holomorphic functions $\sigma_j(g)\ne 0$
iff $g$ has a unique LDU factorization, $g=ldu$, $l\in \overline{N^-}$, $d=\prod_j\sigma_j(g)^{h_j}\in H$, $u\in \overline{N^+}$, and analogous to the $G(A)$ case, $l$ and $u$ are meromorphic in any given representation.

We now mimic (what is usually referred to as) the Kostant formulation of the Toda lattice equations (see \cite{Kostant}, although many other sources could be cited).
Define $\mathfrak b^{\pm}=\mathfrak h+\mathfrak n^{\pm}$.
Since the form $\langle\cdot,\cdot\rangle$ extends continuously to $\overline {\mathfrak g}$ there is an induced map with dense image,
$$\overline{\mathfrak b^-} \to (\overline{\mathfrak b^+})^*:x \to \langle x,\cdot\rangle$$
This induces a Poisson structure on $\overline{\mathfrak b^-}$.
Define $\epsilon=\sum_{i=1}^n e_i$ and $ M=\epsilon+\overline{\mathfrak b^-} $.
Via the identification
$$\overline{\mathfrak b^-} \to M: x \to \epsilon +x$$
there is an induced Poisson structure on $M$.

Suppose that $x(t)$ is a curve in $M$. The complex Kostant version of the Toda equations is given by
\begin{equation}\label{todaeqn}\frac{d}{dt} x(t)=[proj_{\mathfrak n^-}(x(t)),x(t)]\end{equation}
Since $x=\epsilon+x_{0-}$, $x_{0-}\in \overline{\mathfrak{b}^-}$, the right hand side of (\ref{todaeqn})
is in $\overline{\mathfrak{ b}^-}$, the tangent space to $M$. The basic facts about these equations for finite type Cartan matrices carry over directly to the more general setting, namely

\begin{proposition}(a) The Toda equations are Hamiltonian for $M$. More precisely, define
$$H:M\to \mathbb C:x \to \frac 12 \langle x,x\rangle$$
Then for any smooth function $f:M\to \mathbb C $,
$$\frac{d}{dt} f(x(t))=\{H,f\}|_{x(t)} $$
for any solution $x(t)$.

(b) Given an initial condition $x_0$ with $e^{x_0}\in \Sigma_1$,
$x(t)=Ad(l(t)^{-1})(x_0)$ is a solution of (\ref{todaeqn}), up to the time when $e^{tx_0}$ exits $\Sigma_1$,
where $e^{tx_0}$ has LDU factorization $e^{tx_0}=l(t)d(t)u(t)$. $x(t)$ is a meromorphic function of $t$.

(c) Suppose that $x_0=\epsilon+\sum_{j=1}^n(a_jh_j+b_jf_j)$ (note in the $sl(n,\mathbb C)$ case this is tridiagonal).
Then the Toda equations are equivalent to
$$ \frac{d}{dt}a_j=b_j\qquad \frac{d}{dt} b_j=\sum_{i=1}^n a_i A_{ij} b_j, \qquad j=1,...,n$$
\end{proposition}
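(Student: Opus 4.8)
The plan is to reduce each part to the classical finite-type arguments, invoking the assumed properties of $\overline{G}$ only where the completion actually intervenes.

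For part (a) I would first make the bracket explicit. Under the identification $\overline{\mathfrak b^-}\cong(\overline{\mathfrak b^+})^*$ the ambient structure is the Lie--Poisson bracket $\{f,g\}(y)=\langle y,[df_y,dg_y]\rangle$ with $df_y,dg_y\in\overline{\mathfrak b^+}$; transporting it along $y\mapsto\epsilon+y$ produces, at $x=\epsilon+y$, the bracket $\langle x-\epsilon,[df_x,dg_x]\rangle$. The first point is that the shift drops out: since $[df_x,dg_x]\in\overline{\mathfrak b^+}$ and $\epsilon\in\overline{\mathfrak n^+}$ annihilates $\overline{\mathfrak b^+}$ under the form, one may write $\{f,g\}(x)=\langle x,[df_x,dg_x]\rangle$ with the full $x$. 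Next I would compute the differential of $H$: from $\langle dH_x,\eta\rangle=\langle x,\eta\rangle$ for $\eta\in\overline{\mathfrak b^-}$ together with the orthogonality of the triangular summands, one reads off $dH_x=\mathrm{proj}_{\mathfrak b^+}(x)$. Finally, writing $x=\mathrm{proj}_{\mathfrak n^-}(x)+\mathrm{proj}_{\mathfrak b^+}(x)$ and using invariance of the form, $\{H,f\}(x)=\langle x,[\mathrm{proj}_{\mathfrak b^+}(x),df_x]\rangle=\langle df_x,[\mathrm{proj}_{\mathfrak n^-}(x),x]\rangle$, which is exactly $df_x$ evaluated on the right-hand side of (\ref{todaeqn}); this gives (a).

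For part (b) I would use the factorization (Kostant/Symes) method. Set $g(t)=e^{tx_0}$, holomorphic in $t$ and commuting with $x_0$, so that $\dot g=x_0g$. While $g(t)\in\Sigma_1$ it factors uniquely as $g=l\,d\,u$; grouping the positive part $b_+:=du$, differentiating $g=l\,b_+$ and using $\dot g=x_0g$ yields $l^{-1}x_0l=l^{-1}\dot l+\dot b_+b_+^{-1}$. Since $l^{-1}\dot l\in\overline{\mathfrak n^-}$ and $\dot b_+b_+^{-1}\in\overline{\mathfrak b^+}$, and $x(t):=\Ad(l^{-1})x_0=l^{-1}x_0l$, this is precisely the splitting $x=\mathrm{proj}_{\mathfrak n^-}(x)+\mathrm{proj}_{\mathfrak b^+}(x)$, giving the key identity $l^{-1}\dot l=\mathrm{proj}_{\mathfrak n^-}(x)$. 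Differentiating $x=\Ad(l^{-1})x_0$ then reproduces the Lax form (\ref{todaeqn}). Meromorphy of $x(t)$ follows because $t\mapsto e^{tx_0}$ is holomorphic (the assumed holomorphic exponential) and the factors $l,d,u$ are meromorphic in each representation (an assumed property of the Birkhoff stratification of $\overline G$), so $x(t)=\Ad(l(t)^{-1})x_0$ is meromorphic up to the first exit from $\Sigma_1$.

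For part (c) I would substitute $x=\epsilon+\sum_j(a_jh_j+b_jf_j)$ directly into (\ref{todaeqn}). Here $\mathrm{proj}_{\mathfrak n^-}(x)=\sum_jb_jf_j$, and the bracket splits into three pieces evaluated from the Chevalley--Serre relations: $[\sum_jb_jf_j,\epsilon]$ produces the $\mathfrak h$-terms via $[f_j,e_i]=-\delta_{ij}h_i$; $[\sum_jb_jf_j,\sum_ia_ih_i]$ produces the $f_j$-terms via $[f_j,h_i]=A_{ij}f_j$; and the self-bracket $[\sum_jb_jf_j,\sum_ib_if_i]$ vanishes by antisymmetry, which is exactly what keeps the Jacobi (tridiagonal) form of $x$ invariant. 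Matching the $\mathfrak h$- and $\mathfrak n^-$-components against $\dot x=\sum_j(\dot a_jh_j+\dot b_jf_j)$ then yields the stated system. The routine content of all three parts is small once the setup is granted; the genuine work, and the main thing to be careful about, is twofold. First, in (b) one must know that the finite-dimensional factorization argument survives in the infinite-dimensional completion — that $\exp$, the LDU factorization, and the meromorphy of the factors all persist — which is precisely the role of the standing assumptions on $\overline G$, so the difficulty has been displaced rather than eliminated. Second, one must fix the sign convention for the Lie--Poisson bracket (equivalently the orientation of $\overline{\mathfrak b^-}\cong(\overline{\mathfrak b^+})^*$) once and for all, so that the Hamiltonian flow in (a), the factorization solution in (b), and the explicit ODEs in (c) are mutually consistent.
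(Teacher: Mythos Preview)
Your argument is correct, and for (b) it matches the paper's factorization scheme line for line: differentiate $e^{tx_0}=l\,b_+$ with $b_+=du$, obtain $x=l^{-1}\dot l+\dot b_+b_+^{-1}$, and read off $l^{-1}\dot l=\mathrm{proj}_{\mathfrak n^-}(x)$. For (a) you actually supply more than the paper, which explicitly declines to prove that part.

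The one place your route diverges is in (c). You preserve the tridiagonal form infinitesimally, by noting that $[\sum_j b_jf_j,\sum_i b_if_i]=0$ so the Toda vector field is tangent to the height-$\{-1,0,+1\}$ slice. The paper instead uses the two global expressions $x(t)=(du)\,x_0\,(du)^{-1}=l^{-1}x_0\,l$ established in (b): since $du\in\overline{B^+}$, conjugation by $du$ cannot produce components of height below $-1$, while conjugation by $l\in\overline{N^-}$ cannot produce components of height above $+1$; hence $x(t)$ is tridiagonal for all $t$ for which the factorization exists. Your approach is more elementary and makes (c) independent of the group-theoretic assumptions on $\overline G$; the paper's approach has the virtue of being a global argument, so it avoids any appeal to uniqueness for the infinite-dimensional ODE when concluding that a tridiagonal initial condition stays tridiagonal.
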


We do not need (a), although we will mention its relevance below.

\begin{proof} Define $x(t)=l(t)^{-1}x_0 l(t)$ (in some matrix representation), as in (b).
Directly from this definition one calculates that
\begin{equation}\label{formula0}\frac{d}{dt} x=[l^{-1}\frac{d}{dt} l,x]\end{equation}
Since  $ldu=e^{tx_0}$, it follows that
$$\frac{d}{dt} l du+l\frac{d}{dt}{du}=e^{tx_0} x_0$$
This implies
\begin{equation}\label{formula1}l^{-1}\frac{d}{dt} l+\frac{d}{dt}(du)(du)^{-1}=l^{-1}e^{tx_0}x_0(du)^{-1}=l^{-1}x_0e^{tx_0}(du)^{-1}\end{equation}
The last two expressions simplify to
\begin{equation}\label{formula2}=(du)x_0(du)^{-1}=l^{-1}x_0l\end{equation}
and this is equal to $x(t)$. The equality of $x$ to the first expression in \ref{formula1} implies
that
$$proj_{\mathfrak n^-}(x)=l^{-1}\frac{d}{dt} l$$
Together with (\ref{formula0}), this proves part (b).

Now consider part (c). It is convenient to write $x_0=\epsilon+h\vec a(0)+f\vec b(0)$, where
$h=(h_1,...,h_n)$, $f=(f_1,...,f_n)$ and $a$ and $b$ are columns. We also need a bit more structure.
The Lie algebra $\mathfrak g$ is graded by height, where $f_j,h_j,e_j$ have heights $-1,0,+1$ respectively.
For example $\epsilon$ is homogeneous of height $+1$, and $x_0$ is `tridiagonal'. Since $l\in \overline N^-$ and $du\in \overline B^+$, the first formula for $x(t)$ in (\ref{formula2}) implies the homogeneous components of $x(t)$ have heights $\ge -1$, and the second formula in (\ref{formula2}) implies the homogeneous components of $x(t)$ have heights $\le +1$. Thus $x(t)$ is `tridiagonal', and since $x(t)\in M$, $x(t)=\epsilon+h\vec a(t)+f\vec b(t)$.
The right hand side of (\ref{todaeqn}) is
\begin{equation}\label{todaeqn2}=[f\vec b,\epsilon+h\vec a+f\vec b] =h\vec b+[f\vec b,h\vec a]=h\vec b+f\vec{(\sum_ia_iA_{ij}b_j)}\end{equation}
where the second equality uses the Chevalley-Serre relations. This implies (c).

\end{proof}

\begin{corollary} If the solutions in (c) are not meromorphic, then there does not exist a Lie completion of $G(A)$ having the properties that we ascribed to $\overline G$ above.\end{corollary}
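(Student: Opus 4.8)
The statement is the contrapositive of part (b) of the Proposition, so the real task is to make explicit why the hypothesized existence of $\overline{G}$ forces the solutions in (c) to be meromorphic. The plan is as follows. First I would assume that a Lie completion $\overline{G}$ with all three ascribed properties exists, and fix the tridiagonal initial condition $x_0=\epsilon+\sum_j(a_jh_j+b_jf_j)$ of part (c). Because $\exp:\overline{\mathfrak g}\to\overline{G}$ is holomorphic and $t\mapsto tx_0$ is entire, the curve $t\mapsto e^{tx_0}$ is an entire curve in $\overline{G}$; fixing a fundamental highest weight representation, $e^{tx_0}$ acts holomorphically on the representation space.

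Next I would examine the fundamental matrix coefficients $\sigma_j(e^{tx_0})=\langle e^{tx_0}\cdot v,v\rangle$. Since the form extends continuously in each variable and $e^{tx_0}\cdot v$ depends holomorphically on $t$, each $\sigma_j(e^{tx_0})$ is a holomorphic function of $t$. At $t=0$ we have $e^{0}=\mathrm{id}\in\Sigma_1$, so $\sigma_j(\mathrm{id})\ne 0$ and none of these functions vanishes identically. Hence the set of $t$ for which $e^{tx_0}$ leaves the top stratum $\Sigma_1$ is the common zero set of finitely many nonzero holomorphic functions, a discrete subset of $\mathbb C$. On the complement the Birkhoff/LDU property provides the factorization $e^{tx_0}=l(t)d(t)u(t)$ with $l(t)$ meromorphic in the chosen representation, and part (b) identifies the Toda solution as $x(t)=Ad(l(t)^{-1})(x_0)$. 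Since $x_0$ is fixed and $l(t)$ is meromorphic, $x(t)$ is meromorphic in $t$.

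Finally, part (c) shows that this very $x(t)$ equals $\epsilon+h\vec a(t)+f\vec b(t)$ and solves the stated ODE system for the $a_j,b_j$; reading off the coordinates of $x(t)$ against the height decomposition shows that each $a_j(t),b_j(t)$ is meromorphic. Thus the existence of $\overline G$ implies that the solutions in (c) are meromorphic, and the corollary follows by contraposition.

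The one point that requires genuine care --- rather than routine verification --- is the passage from a solution defined only up to the first exit time to a globally meromorphic function. A priori part (b) only guarantees the factorization while $e^{tx_0}$ remains in $\Sigma_1$; what rules out a worse-than-polar singularity (an essential singularity, a branch point, or a natural boundary) at an exit time is precisely the assumption that the factors $l$ and $u$ are meromorphic in any given representation, which forces the local factorizations to patch into a single meromorphic object across the discrete exit set. This is exactly the property that one expects to fail in the indefinite case, and it is where the substance of the corollary resides.
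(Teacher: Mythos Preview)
Your proposal is correct and follows the same route as the paper: the paper gives no separate proof of the Corollary, treating it as the immediate contrapositive of part (b) of the Proposition (which already asserts that, under the standing hypotheses on $\overline G$, the Toda solution $x(t)=Ad(l(t)^{-1})(x_0)$ is meromorphic in $t$). You have simply made explicit the steps the paper leaves implicit --- holomorphy of $t\mapsto e^{tx_0}$ and of the $\sigma_j(e^{tx_0})$, discreteness of the exit set, and the role of the assumed meromorphicity of the $l,u$ factors --- and your final paragraph correctly isolates the one substantive point (patching across exit times) on which the argument genuinely rests.
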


The Toda equations in (c) were studied by Zou in the simplest rank 2 cases, i.e. when the Cartan matrix
$A=\left(\begin{matrix}2&-k \\-l &2\end{matrix}\right)$, see chapter 5 of \cite{Zou}, who was mainly interested in integrability criteria from the Hamiltonian point of view.

\begin{remark} To put Zou's equations (5.15)-(5.18) into our form, viewing $a$ and $b$ as columns, make the linear change
$$y=\left(\begin{matrix}\frac{2}{A}&\frac{2B}{B^2+C^2}\\0&\frac{2C}{B^2+C^2}\end{matrix}\right)a,\quad
x=-\left(\begin{matrix}\frac{2}{A^2}&0\\0&\frac{2}{B^2+C^2}\end{matrix}\right)b $$
The `Cartan matrix' is then $\left(\begin{matrix}2&\frac{2AB}{B^2+C^2}\\\frac{2B}{A}&2\end{matrix}\right)$. The off diagonal entries are not necessarily nonnegative integers, so Zou is considering more general equations.

\end{remark}

In connection with this work, Zou found that the solutions are not generally meromorphic, although in retrospect a definitive theorem seems to be lacking. This appears to be a practical empirical criterion in particular cases, but to our knowledge there are not any general theorems.

\begin{remarks} (a)  The Virasoro algebra is a simple Lie algebra which is very similar to a Kac-Moody Lie algebra and which does not have a global complexification (see Proposition 3.3.2 of \cite{PS}).

(b) Physicists have been speculating for a long time that hyperbolic type Lie algebras might be concretely related to supergravity. A list of hyperbolic algebras, and some references to the physics literature, can be found in \cite{Carbone}.

\end{remarks}

\section{An Example of the Verblunsky Correspondence}\label{Verblunsky}

In the introduction we recalled one way to define the Verblunsky map $Prob'(S^1) \to \prod_{n=1}^{\infty}\Delta$, using orthogonal polynomials and the Szego recursion. This definition may seem to lack motivation, and truthfully for us the definition seems a bit ad hoc. As it turns out, there are other definitions, very different in appearance, such as the Schur algorithm, see the remarkable theorems of Geronimus in section 3 \cite{Simon2}). Hence the definition is fully justified, even if it does not seem entirely natural.

Here, in order to prove Theorems \ref{thm1} and \ref{thm2} of the introduction, we will briefly recall from Section 2 of \cite{LP} our perspective on how to invert the Verblunsky map, which emphasizes the connection with so called root subgroup factorization for the loop group $LSU(1,1)$ (see \cite{BP} for how this can be motivated).

Suppose that $\alpha\in \prod_{n=1}^{\infty}\Delta$, and suppose initially that $\sum_{n=1}^{\infty}|\alpha_n|<\infty$. Then
$$g_2(z):=\prod_{k=1}^{\stackrel{\infty}{\leftarrow}}(1-|\alpha_k|^2)^{-1/2}\left(\begin{matrix}1&\alpha_k^*z^{-k}\\ \alpha_kz^{k} & 1 \end{matrix}\right)$$
is a continuous function $g_2:S^1\to SU(1,1)$ (This is an example of a (special) root subgroup factorization; the name arises because the factors correspond to root subgroups of the loop group). This $SU(1,1)$ loop has the very special form
\begin{equation}\label{property1}g_2(z)= \left(\begin{matrix}d_2^*(z)&c_2^*(z)\\c_2(z)&d_2(z)\end{matrix}\right)\end{equation}
where $c_2$ and $d_2$ are (boundary values of) holomorphic functions in $\Delta$ satisfying $c_2(0)=0$ and $d_2(0)=1$. This loop also has a triangular factorization (essentially a Riemann-Hilbert factorization) of the especially simple form
\begin{equation}\label{property2}g_2(z)=\left(\begin{matrix}1& \bf x^*(z)\\0&1\end{matrix}\right)\left(\begin{matrix}\mathbf a_2&0\\0&\mathbf a_2^{-1}\end{matrix}\right)\left(\begin{matrix}\alpha_2(z)&\beta_2(z)\\\gamma_2(z)&\delta_2(z)\end{matrix}\right) \end{equation}
where the first matrix is holomorphic in $\Delta^*$ and $=1$ at $\infty$, $\mathbf a_2$ is a positive constant, and the third matrix is a holomorphic map $\overline{\Delta} \to SL(2,\mathbb C))$ and unipotent upper triangular at $z=0$ (see \cite{BP}). It is important to note that this depends on the continuity of $g_2$ (so that the Toeplitz operator
corresponding to this symbol is Fredholm), in addition to its special form, hence on the assumption that $\alpha$ is absolutely summable.

\begin{remark}\label{KdV} As an aside regarding completely integrable systems, in the Segal-Wilson framework
for KdV, loops into $SL(2,\mathbb C)$ map to subspaces of
a Grassmannian, which in turn correspond to solutions of the KdV equation. In the notation of \cite{SW},
if $W$ denotes the subspace corresponding to
the loop $g_2(\alpha)$ then the (initial condition for) the Baker function is
$$\psi_W(z)=1+z(x^*(z^2))=1+x_1^*z^{-1}+x_2^*z^{-3}+x_3^*z^{-5}+...$$
Thus $x^*(z)$ is essentially the initial condition for the Baker function (which itself has dependence on time).
Whether root subgroup factorization (as in \cite{BP}) has some use in the Segal-Wilson framework is unknown to us.

\end{remark}

The following is explained in section 2.1 of \cite{LP}:

\begin{proposition}\label{Verblunsky} Via the Verblunsky correspondence, the measure $\mu\in Prob'(S^1)$ that
corresponds to $\alpha\in \prod_{n=1}^{\infty}\Delta$, assuming $(\alpha_n)$ is absolutely summable, is given by
$$d\mu=\frac{\prod_{k=1}^{\infty}(1-|\alpha_k|^2)}{|\gamma_2+\delta_2|^2}\frac{d\theta}{2\pi}$$

\end{proposition}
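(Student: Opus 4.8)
The plan is to reduce the asserted formula to the classical Verblunsky reconstruction \eqref{verblunskymap2}, by identifying the bottom row of $g_2$ with orthonormal polynomial data. Write $\rho_k=(1-|\alpha_k|^2)^{1/2}$ and let $\phi_N,\phi_N^*$ denote the orthonormal polynomials and their reverses attached to $\mu$ (normalized via the Szeg\H{o} recursion \eqref{Szego}, so that $|\phi_N|=|\phi_N^*|$ on $S^1$). First I would introduce the partial products $g_2^{(N)}=\prod_{k=1}^{\stackrel{N}{\leftarrow}}\rho_k^{-1}\left(\begin{matrix}1&\alpha_k^*z^{-k}\\\alpha_kz^k&1\end{matrix}\right)$ and recognize them, after a diagonal gauge transformation, as products of Szeg\H{o} transfer matrices. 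Concretely, the per-factor identity $\mathrm{diag}(z^{-k},1)\,\hat T_k=A_k\,\mathrm{diag}(z^{-(k-1)},1)$, with $A_k=\rho_k^{-1}\left(\begin{matrix}1&\alpha_k^*z^{-k}\\\alpha_kz^k&1\end{matrix}\right)$ and $\hat T_k=\rho_k^{-1}\left(\begin{matrix}z&\alpha_k^*\\\alpha_kz&1\end{matrix}\right)$, telescopes to $g_2^{(N)}=\mathrm{diag}(z^{-N},1)\,\hat T_N\cdots\hat T_1$. Since the diagonal prefactor has bottom row $(0,1)$, the bottom row of $g_2^{(N)}$ is that of $\hat T_N\cdots\hat T_1$; applying the latter to $(\phi_0,\phi_0^*)^t=(1,1)^t$ yields the key identity $c_2^{(N)}+d_2^{(N)}=\phi_N^*$ (already visible at $N=1$, where both sides equal $\rho_1^{-1}(1+\alpha_1z)$).

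Second, I would let $N\to\infty$. Because $g_2^{(N)}\to g_2$, the entries converge and $c_2+d_2=\lim_N\phi_N^*=:\Pi$. Here the hypothesis $\sum_k|\alpha_k|<\infty$ is doing real work: it places us in Baxter's regime, where $\mu$ is purely absolutely continuous with continuous nonvanishing weight, and the reversed polynomials $\phi_N^*$ converge uniformly on the closed disk $\overline\Delta$ to a function $\Pi$ that is holomorphic in $\Delta$ and continuous, nonvanishing up to $S^1$. In particular $|\Pi(e^{i\theta})|^2=\lim_N|\phi_N^*(e^{i\theta})|^2$ pointwise on $S^1$.

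Third, I would insert this into the triangular factorization \eqref{property2}. Its first factor is unipotent upper triangular, hence has bottom row $(0,1)$, so reading off the bottom row of $g_2$ gives $(c_2,d_2)=\mathbf a_2^{-1}(\gamma_2,\delta_2)$ and therefore $\gamma_2+\delta_2=\mathbf a_2(c_2+d_2)=\mathbf a_2\,\Pi$. A short computation—tracking the scalar normalizations $\rho_k^{-1}$ through the factorization—identifies the positive constant as $\mathbf a_2=\prod_k\rho_k=\prod_k(1-|\alpha_k|^2)^{1/2}$. Consequently, on $S^1$,
$$\frac{\prod_k(1-|\alpha_k|^2)}{|\gamma_2+\delta_2|^2}=\frac{\prod_k(1-|\alpha_k|^2)}{\mathbf a_2^2\,|\Pi|^2}=\frac{1}{|\Pi|^2}=\frac{1}{\lim_N|\phi_N^*|^2}.$$

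Finally, I would match against \eqref{verblunskymap2}. Writing $p_N=\|p_N\|\,\phi_N$ with $\|p_N\|^2=\prod_{k=1}^N(1-|\alpha_k|^2)$ and using $|\phi_N|=|\phi_N^*|$ on $S^1$, the reconstruction integrand simplifies to $|\phi_N^*|^{-2}$, so $d\mu=\lim_N|\phi_N^*|^{-2}\tfrac{d\theta}{2\pi}=|\Pi|^{-2}\tfrac{d\theta}{2\pi}$, which is precisely the right-hand side of the previous display times $\tfrac{d\theta}{2\pi}$, giving the claim. The main obstacle, and the step I would spend the most care on, is the passage to the limit on the circle itself in the second and third paragraphs: one must know that $\phi_N^*$ converges uniformly up to $S^1$ (not merely on compact subsets of $\Delta$) and that the boundary values of $\gamma_2+\delta_2$ may be compared with $\lim_N|\phi_N^*|^2$ pointwise a.e. This is exactly where absolute summability is indispensable—it makes $g_2$ continuous, so that the associated Toeplitz operator is Fredholm and the factorization \eqref{property2} genuinely exists with boundary values, while Baxter's theorem supplies the uniform convergence. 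The remaining ingredients—the gauge identification of the first paragraph and the evaluation of $\mathbf a_2$—are routine once the sign convention relating the present Verblunsky coefficients to Simon's is fixed.
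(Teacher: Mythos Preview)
Your argument is correct, and it takes a genuinely different route from the paper's. The paper does not prove the proposition from scratch here; it cites \cite{LP} and then gives a short derivation via the Schur/Carath\'eodory formalism: from $\mathbf f=-\gamma_2/(z\delta_2)$ one has $F=(\delta_2-\gamma_2)/(\delta_2+\gamma_2)$, and on $S^1$ the real part is $(|\delta_2|^2-|\gamma_2|^2)/|\delta_2+\gamma_2|^2=\prod_k(1-|\alpha_k|^2)/|\delta_2+\gamma_2|^2$, after which the classical identity $d\mu=\Re(F)\,d\theta/2\pi$ finishes the job. Your approach bypasses $F$ entirely: you gauge the root-subgroup factors into Szeg\H{o} transfer matrices, read off $c_2^{(N)}+d_2^{(N)}=\phi_N^*$, invoke Baxter's theorem to pass to the boundary limit, and then use the bottom row of the triangular factorization together with $\mathbf a_2=\prod_k\rho_k$ to match the reconstruction formula \eqref{verblunskymap2}.

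What each buys: the paper's route is shorter once one accepts the Schur-algorithm identification $\mathbf f=-\gamma_2/(z\delta_2)$ and the standard $d\mu=\Re(F)\,d\theta/2\pi$; it also makes transparent why the formula extends (as an equality of holomorphic functions in $\Delta$) beyond the summable regime, which the paper exploits later. Your route is more self-contained from the OPUC side and makes the role of absolute summability very explicit through Baxter's theorem, at the cost of importing that theorem. One small remark: your evaluation $\mathbf a_2=\prod_k\rho_k$ is right and follows cleanly from $\delta_2(0)=1$ (unipotent-at-$0$ normalization of the third factor) together with $d_2(0)=\prod_k\rho_k^{-1}$; note that the paper's passing claim ``$d_2(0)=1$'' is not literally compatible with the displayed definition of $g_2$---your computation uses the correct value.
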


The absolute summability of $\alpha$ guarantees that the denominator for the density of $d\mu$ is nonvanishing around the circle, hence the density is continuous and invertible. Theorem \ref{thm1} will show this fails for the marginally non-summable sequence $\alpha_n=\frac{1}{1+n\beta}$. In general given $\alpha\in \prod \Delta$, $\alpha$ and $\delta$ are holomorphic in $\Delta$, and the sum $\alpha+\delta$ is nonvanishing in $\Delta$ (see the Caratheodory function
in the next paragraph). It is the boundary behavior which is in question.

The measure $\mu$ is determined by a number of ancillary objects which are explained in Section 2 of \cite{Simon2}.
From our perspective on the inverse of the Verblunsky map, these are expressed in the following way.
Because $g_2$ has values in $SU(1,1)$, the so called Schur function
$$\mathbf f(z):=-\frac{c_2(z)}{zd_2(z)}=-\frac{\gamma_2(z)}{z\delta_2(z)}$$
is a holomorphic map $\Delta \to \Delta$ (this follows from $|d_2|^2-|c_2|^2=1$ on $S^1$, $c_2(0)=0$, and the maximum principle). The Caratheodory function, implicitly defined
by $F(z)=\frac{1+z\mathbf f(z)}{1-z\mathbf f(z)}$ is a holomorphic function which maps the disk to the right half plane. In our case, using the expression for the Schur function,
\begin{equation}\label{Caratheodory}F=\frac{\delta-\gamma}{\delta+\gamma} \end{equation}
Note that on $S^1$
$$F=\frac{(\delta_2-\gamma_2)(\delta_2+\gamma_2)^*}{(\delta_2+\gamma_2)(\delta_2+\gamma_2)^*}
=\frac{\delta_2\delta_2^*-\gamma_2\gamma_2^*+Imaginary}
{|\delta_2+\gamma_2|^2}$$
implying
$$Re(F)\frac{d\theta}{2\pi}=\frac{\prod_{k=1}^{\infty}(1-|\alpha_k|^2)}{|\gamma_2+\delta_2|^2}\frac{d\theta}{2\pi} $$
as in the Proposition.

The formula for the Caratheodory formula implies the following

\begin{lemma}\label{deltaformula}
$$\label{deltaformula}\delta=\frac 12(1+F)(\delta+\gamma) $$
This implies that the `multiplicative regular part of $g_2$', the holomorphic function
$$\left(\begin{matrix} \alpha_2&\beta_2\\\gamma_2&\delta_2\end{matrix}\right)=\left(\begin{matrix} \alpha_2&\beta_2\\\frac 12(1-F)(\delta_2+\gamma_2)&\frac 12(1+F)(\delta_2+\gamma_2)\end{matrix}\right) $$
\end{lemma}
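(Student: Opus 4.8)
The plan is to obtain the identity by elementary algebraic rearrangement of the definition (\ref{Caratheodory}) of the Caratheodory function, so there is essentially no analytic work to do. Abbreviating $\gamma_2,\delta_2$ as $\gamma,\delta$ (as in the displayed computation on $S^1$ preceding the lemma), I start from $F=\frac{\delta-\gamma}{\delta+\gamma}$ and clear the denominator to get $F(\delta+\gamma)=\delta-\gamma$. Adding $\delta+\gamma$ to both sides and collecting terms gives $(1+F)(\delta+\gamma)=2\delta$, which is exactly the asserted formula $\delta=\frac12(1+F)(\delta+\gamma)$ after dividing by $2$.

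For the consequence about the regular factor of $g_2$, I carry out the symmetric manipulation for $\gamma$: subtracting in place of adding yields $(1-F)(\delta+\gamma)=2\gamma$, hence $\gamma=\frac12(1-F)(\delta+\gamma)$. Substituting these two expressions for $\delta_2$ and $\gamma_2$ into the bottom row of the third (holomorphic, regular) factor in the triangular factorization (\ref{property2}), and leaving the entries $\alpha_2,\beta_2$ of the top row untouched, reproduces the displayed matrix verbatim.

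There is no genuine obstacle here: all of the content resides in the earlier identification of $F$ with $\frac{\delta-\gamma}{\delta+\gamma}$, which itself rests on the $SU(1,1)$ structure of $g_2$ (the relation $|d_2|^2-|c_2|^2=1$, together with $c_2(0)=0$) established before the lemma. The only point worth a line of care is that the division by $\delta+\gamma$ is legitimate, i.e. that $\delta+\gamma$ is nonvanishing on $\Delta$; this is already implicit in $F$ being a well-defined Caratheodory function (equivalently, $1-z\mathbf f\neq 0$ since $\mathbf f$ maps $\Delta$ into $\Delta$), so the rearrangement is valid as an identity of holomorphic functions and not merely formal.
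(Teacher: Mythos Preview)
Your argument is correct and is exactly what the paper intends: it states the lemma as an immediate consequence of the Caratheodory formula $F=\frac{\delta-\gamma}{\delta+\gamma}$, and your algebraic rearrangement (together with the symmetric one for $\gamma$) is the only content there is. Your remark that $\delta+\gamma$ is nonvanishing on $\Delta$ because $F$ is a well-defined Caratheodory function is the appropriate justification, matching the paper's comment just before the lemma.
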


To clarify this lemma, to guarantee that $g_2$ has a triangular factorization as in (\ref{property2}), we need to assume that $\alpha$ is absolutely summable - for our example $\alpha_n=\frac{1}{1+n\beta}$, $g_2$ does not have a triangular factorization. However the formula for $\delta$ is valid in general, as an equality in $\Delta$.

\subsection{Proof of Theorems \ref{thm1} and \ref{thm2} }

To explicitly understand the Verblunsky correspondence, one needs to calculate $F$ and $\delta_2+\gamma_2$.
In the remainder of this section we assume $\beta>0$ and set $m=1/\beta$.

\begin{theorem}Suppose that $\alpha_n=\frac{1}{n\beta+1}$. Then
$$\mu=\sqrt{\pi}2^{-m}\frac{\Gamma(m+ 1)}{\Gamma(m+\frac 12)}(1-cos(\theta)^{m}\frac{d\theta}{2\pi} $$
and hence
$$Re(F)=\sqrt{\pi}2^{-m}\frac{\Gamma(m+ 1)}{\Gamma(m+\frac 12)}(1-cos(\theta)^{m}$$
\end{theorem}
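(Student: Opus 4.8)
The plan is to read the measure directly off the limit matrix in Theorem \ref{thm2}, which I denote $L(z)$, without ever needing the explicit series for $F_m$. Since $\alpha_n=1/(n\beta+1)$ is real, each factor of $g_2$ equals the corresponding factor in Theorem \ref{thm2} times the scalar $(1-|\alpha_k|^2)^{-1/2}$, so pulling these scalars out of the (convergent, since $\sum|\alpha_k|^2<\infty$ even though $\alpha$ is not summable) product gives $g_2=C\,L$ with $C:=\prod_{k\ge1}(1-|\alpha_k|^2)^{-1/2}$. First I would use this to identify the bottom row $(c_2,d_2)$ of $g_2$, cf. (\ref{property1}): it is $C$ times the bottom row of $L$. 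The key observation is that adding the two bottom entries of $L$ makes the unknown $F_m$ cancel,
$$\frac{1-F_m(z)}{2(1-z)^m}+\frac{1+F_m(z)}{2(1-z)^m}=\frac{1}{(1-z)^m},$$
so that $c_2+d_2=C/(1-z)^m$.

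Next I would compute $\mathrm{Re}(F)$ on $S^1$ using $g_2$ alone, which sidesteps the triangular factorization (\ref{property2}) that is unavailable for this non-summable $\alpha$. From the Schur function $\mathbf f=-c_2/(zd_2)$ and $F=(1+z\mathbf f)/(1-z\mathbf f)$ one gets $F=(d_2-c_2)/(d_2+c_2)$; since $\det g_2\equiv1$ forces $|d_2|^2-|c_2|^2=1$ on $S^1$, the same manipulation as in the computation preceding Lemma \ref{deltaformula}, now with $(c_2,d_2)$ in place of $(\gamma_2,\delta_2)$, gives
$$\mathrm{Re}(F)=\frac{|d_2|^2-|c_2|^2}{|c_2+d_2|^2}=\frac{1}{|c_2+d_2|^2}.$$
Substituting $c_2+d_2=C/(1-z)^m$ and $|1-z|^2=2(1-\cos\theta)$ on $S^1$ yields
$$\mathrm{Re}(F)=\frac{|1-z|^{2m}}{C^2}=2^m\Big(\prod_{k\ge1}(1-|\alpha_k|^2)\Big)(1-\cos\theta)^m.$$

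It then remains to evaluate the infinite product and match the Gamma-function constant. With $m=1/\beta$ one has $\alpha_k=m/(m+k)$ and $1-|\alpha_k|^2=k(k+2m)/(k+m)^2$, so the standard evaluation of a balanced product (numerator shifts $0,2m$ and denominator shifts $m,m$ having equal sum) gives
$$\prod_{k=1}^{\infty}\frac{k(k+2m)}{(k+m)^2}=\frac{\Gamma(m+1)^2}{\Gamma(2m+1)}.$$
Finally the Legendre duplication formula $\Gamma(2m+1)=\frac{2^{2m}}{\sqrt\pi}\Gamma(m+1)\Gamma(m+\tfrac12)$ collapses the prefactor, since $2^m\,\Gamma(m+1)^2/\Gamma(2m+1)=\sqrt\pi\,2^{-m}\Gamma(m+1)/\Gamma(m+\tfrac12)$, so that
$$\mathrm{Re}(F)=\sqrt\pi\,2^{-m}\frac{\Gamma(m+1)}{\Gamma(m+\tfrac12)}(1-\cos\theta)^m;$$
as $d\mu=\mathrm{Re}(F)\,\frac{d\theta}{2\pi}$, the asserted formula for $\mu$ follows (and one checks this constant is exactly the one making $\mu$ a probability measure).

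The main obstacle is not this computation, which is routine, but the rigor of working in the non-summable regime. Here $c_2+d_2=C(1-z)^{-m}$ blows up as $z\to1$, so $g_2$ fails to be continuous and neither Proposition \ref{Verblunsky} nor the factorization (\ref{property2}) applies literally; correspondingly the density $(1-\cos\theta)^m$ vanishes at $\theta=0$, which is precisely the failure of invertibility flagged after Proposition \ref{Verblunsky}. The delicate point is to justify that $d\mu=\mathrm{Re}(F)\,\frac{d\theta}{2\pi}$ persists as an identity of measures, with no extra singular component concentrated at $z=1$, even though the loop $g_2$ loses continuity there.
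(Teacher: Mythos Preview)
Your argument is correct and at its core runs parallel to the paper's: both proofs hinge on the single identity $c_2+d_2=C/(1-z)^m$ (equivalently $\gamma_2+\delta_2=1/(1-z)^m$, equation (\ref{gamma+delta}) in the paper), after which $\mathrm{Re}(F)=\prod(1-|\alpha_k|^2)\,|1-z|^{2m}$ drops out immediately. The one structural caution is that you source this identity from the \emph{statement} of Theorem~\ref{thm2}, whereas in the paper Theorem~\ref{thm2} is proved \emph{after} and \emph{using} the present theorem (via the computation of $\mathrm{Re}(F)$ and then the Chebyshev expansion of $F_m$). This is not a genuine circularity, because the only piece of Theorem~\ref{thm2} you use is the common denominator $(1-z)^m$, and that comes directly from the super-telescoping formula (\ref{niceidentity}) plus the negative binomial series, independent of anything about $F_m$; but you should cite (\ref{niceidentity}) rather than Theorem~\ref{thm2} to make the logic transparent.

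Where you genuinely diverge from the paper is in pinning down the constant. The paper evaluates $\int_0^{2\pi}(1-\cos\theta)^m\,d\theta$ via the beta integral and then invokes that $\mu$ is a probability measure to read off $\prod(1-|\alpha_n|^2)$. You instead evaluate the product $\prod k(k+2m)/(k+m)^2=\Gamma(m+1)^2/\Gamma(2m+1)$ directly and finish with Legendre duplication. Both are clean; your route has the mild advantage of not appealing to the probability normalization, while the paper's route yields the Wallis-type integral as a by-product. Your explicit acknowledgement that the triangular factorization (\ref{property2}) is unavailable here, and your recasting of $\mathrm{Re}(F)$ purely in terms of $(c_2,d_2)$ via $|d_2|^2-|c_2|^2=1$, is more careful than the paper, which simply applies Proposition~\ref{Verblunsky} despite its summability hypothesis; the residual worry you flag about a possible singular part at $z=1$ is real and is not addressed in the paper either.
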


\begin{proof} For the infinite product
$$\left(\begin{matrix}\delta_2^*(z)&\gamma_2^*(z)\\\gamma_2(z)&\delta_2(z)\end{matrix}\right)=
\prod_{k=1}^{\stackrel{\infty}{\leftarrow}}\left(\begin{matrix}1&\alpha_k^*z^{-k}\\ \alpha_kz^{k} & 1 \end{matrix}\right)$$
it is straightforward to calculate that
$\gamma_2(z)=\sum_{n=1}^{\infty}\gamma_{2,n}z^n$ where
$$\gamma_{2,n}=\sum \bar{\alpha}_{i_1}\alpha_{j_1}...\bar{\alpha}_{
i_r}\alpha_{j_r}\bar{\alpha}_{i_{r+1}}$$ the sum over
multiindices satisfying
$$0<i_1<j_1<..<j_r<i_{r+1},\quad\sum i_{*}-\sum j_{*}=n,$$
and
$\delta_2(z)=1+\sum_{n=1}^{\infty}\delta_{2,n}z^n$ where
$$\delta_{2,n}=\sum \alpha_{i_1}\bar{\alpha}_{j_1}...\alpha_{i_r}\bar{\alpha}_{j_r}$$ the sum over multiindices satisfying
$$0<i_1<j_1<..<j_r,\quad\sum (j_{*}-i_{*})=n.$$
This implies that the $n$th coefficient of $\delta_2+\gamma_2$
is the left hand side of the `super-telescoping formula' (\ref{niceidentity}).
The right hand side of (\ref{niceidentity}) can be written as the `binomial coefficient' $\left(\begin{matrix}m+n-1\\n\end{matrix}\right)$
(which is implicitly defined in terms of $\Gamma$ functions). The negative binomial
formula then implies that
\begin{equation}\label{gamma+delta}\delta_2+\gamma_2=\frac{1}{(1-z)^{m}} \end{equation}
Proposition \ref{Verblunsky} now implies that
$$\mu=\Re(F)\frac{d\theta}{2\pi}=\prod_{n=1}^{\infty}(1-|\alpha_n|^2)|1-z|^{2m}\frac{d\theta}{2\pi} $$
$$=\prod_{n=1}^{\infty}(1-|\alpha_n|^2)(2(1-cos(\theta))^{m}\frac{d\theta}{2\pi} $$
Observe
$$\int_{\theta=0}^{2\pi}(1-cos(\theta))^{m}d\theta=\int_{\theta=0}^{2\pi}(2sin^2(\frac{\theta}{2}))^{m}d\theta
=2\int_{\phi=0}^{\pi}(2sin^2(\phi))^{m}d\phi$$
$$=2^{1+m}2\int_{\phi=0}^{\pi/2}sin(\phi)^{2m}d\phi=2^{1+m}\sqrt{\pi}\frac{\Gamma(m+\frac 12)}{\Gamma(m+1)} $$
Therefore, since $\mu$ is a probability measure
$$\prod_{n=1}^{\infty}(1-|\alpha_n|^2)=\sqrt{\pi}2^{-2m}\frac{\Gamma(m+ 1)}{\Gamma(m+\frac 12)}$$
hence
$$\mu=\sqrt{\pi}2^{-2m}\frac{\Gamma(m+ 1)}{\Gamma(m+\frac 12)}(2(1-cos(\theta))^{m}\frac{d\theta}{2\pi} $$
$$=\sqrt{\pi}2^{-m}\frac{\Gamma(m+ 1)}{\Gamma(m+\frac 12)}(1-cos(\theta)^{m}\frac{d\theta}{2\pi} $$
and this immediately implies the formula for $Re(F)$.

\end{proof}

To prove Theorem \ref{thm2}, we need to prove the following

\begin{theorem} In general
$$F_m(z)=1+\sqrt{\pi}2^{1-m}\frac{\Gamma(m+ 1)}{\Gamma(m+\frac 12)}\sum_{n=1}^{\infty}\sum_{k=0}^{\infty}\left(\begin{matrix}m\\n+2k
\end{matrix}\right)\left(\begin{matrix}n+2k\\k\end{matrix}\right)(\frac{-1}{2})^{n+2k}z^{n} $$
If $m$ is integral, then
$$F_m(z)=1+2\frac{(m!)^2}{(2m-1)!!}\sum_{n=1}^{m}\sum_{k=0}^{[(m-n)/2]}\frac{1}{(m-n-2k)!k!(n+k)!}
(\frac{-1}{2})^{n+2k}z^{n} $$
\end{theorem}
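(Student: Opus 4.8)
The plan is to recover the Caratheodory function $F_m$ from its boundary values, which the preceding theorem already supplies. There we computed the measure $\mu$ explicitly as the purely absolutely continuous probability measure with density $\Re F = \sqrt{\pi}2^{-m}\frac{\Gamma(m+1)}{\Gamma(m+\frac12)}(1-\cos\theta)^m =: C\,(1-\cos\theta)^m$. Since $F$ is the Caratheodory function of $\mu$, i.e. $F(z)=\int \frac{e^{i\theta}+z}{e^{i\theta}-z}\,d\mu(\theta)$, the first thing I would record is the elementary consequence of the Herglotz representation: expanding the kernel as $1+2\sum_{k\geq1}z^k e^{-ik\theta}$ shows that if $\Re F$ has the Fourier expansion $\Re F=\sum_n \hat u_n e^{in\theta}$ on $S^1$, then $F(z)=\hat u_0+2\sum_{n\geq1}\hat u_n z^n$, with $\hat u_0=\mu(S^1)=1$ automatically real. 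Thus the entire problem reduces to computing the nonnegative Fourier coefficients of $(1-\cos\theta)^m$.

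The heart of the computation is then a single binomial expansion. Writing $z=e^{i\theta}$, the clean observation is $1-\cos\theta = 1-\tfrac12(z+z^{-1})$, so that $(1-\cos\theta)^m=\sum_{j\geq0}\binom{m}{j}(-\tfrac12)^j (z+z^{-1})^j$, a series which converges absolutely and uniformly on $S^1$ because $\sum_j|\binom{m}{j}|<\infty$ for $m>0$ and $|(-\tfrac12)^j(z+z^{-1})^j|=|\cos\theta|^j\leq1$. Extracting the coefficient of $z^n$ ($n\geq0$) from $(z+z^{-1})^j=\sum_i\binom{j}{i}z^{j-2i}$ forces $j=n+2k$ and selects $\binom{n+2k}{k}$, so the $z^n$-coefficient of $(1-\cos\theta)^m$ is $\hat a_n=\sum_{k\geq0}\binom{m}{n+2k}\binom{n+2k}{k}(-\tfrac12)^{n+2k}$. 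Hence $\hat u_n=C\hat a_n$, and substituting into $F(z)=1+2\sum_{n\geq1}\hat u_n z^n$ with $2C=\sqrt{\pi}2^{1-m}\frac{\Gamma(m+1)}{\Gamma(m+\frac12)}$ reproduces exactly the stated series. As a consistency check, the $n=0$ term $C\hat a_0=C\binom{2m}{m}2^{-m}$ equals $1$ by the Legendre duplication formula, confirming $F(0)=1$.

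For the integral case I would simply specialize: when $m\in\mathbb Z_{\geq0}$ the binomial series terminates, since $\binom{m}{n+2k}=0$ once $n+2k>m$, so $n$ runs from $1$ to $m$ and $k$ from $0$ to $[(m-n)/2]$. Rewriting $\binom{m}{n+2k}\binom{n+2k}{k}=\frac{m!}{(m-n-2k)!\,k!\,(n+k)!}$ extracts one factor of $m!$, and the prefactor simplifies using $\Gamma(m+1)=m!$ together with $\Gamma(m+\tfrac12)=\frac{(2m-1)!!}{2^m}\sqrt{\pi}$: one finds $\sqrt{\pi}2^{1-m}\frac{\Gamma(m+1)}{\Gamma(m+\frac12)}\,m!=2\frac{(m!)^2}{(2m-1)!!}$, which is precisely the constant appearing in the second displayed formula.

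The main obstacle here is conceptual rather than computational: the step doing the real work is the Herglotz recovery of $F$ from its boundary real part, which relies on $\mu$ having no singular part, so that $F$ is literally the Caratheodory transform of the explicit density and the coefficient relation $F(z)=\hat u_0+2\sum_{n\geq1}\hat u_n z^n$ holds exactly. Once that is granted, the only genuine technical point is justifying the termwise integration used to read off $\hat u_n$; but for $m>0$ the bound $\sum_j|\binom{m}{j}|<\infty$ dominates the integrand uniformly in $\theta$, and the inner expansion of $(z+z^{-1})^j$ is finite, so the interchange of summation and the Fourier integral is immediate.
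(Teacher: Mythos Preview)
Your proof is correct and follows essentially the same route as the paper: compute the Fourier coefficients of $(1-\cos\theta)^m$ and invoke the Caratheodory relation $F(z)=1+2\sum_{n\ge1}c_nz^n$; the paper phrases the Fourier computation via Chebyshev expansions of $x^r$ (using $x^r=2^{1-r}\sum_k\binom{r}{k}\widetilde T_{r-2k}(x)$ and $T_n(\cos\theta)=\cos n\theta$) while you expand $(z+z^{-1})^j$ directly, which is the same calculation in different notation. You also carry out the integral-$m$ simplification of the constant explicitly, which the paper's proof leaves implicit.
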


\begin{remark}This $g_2$ is not continuous, reflecting the fact that $\alpha_n$ is not summable. At least in
simple cases, e.g. $m=1$, it can be checked that $g_2$ does not have a triangular factorization. Hence we
cannot associate a solution to KdV, as in Remark \ref{KdV}, to this kind of loop.

\end{remark}

\begin{proof} This uses facts about Chebyshev expansions. Let $\widetilde T_0=\frac 12 T_0=\frac 12$ and $\widetilde T_n=T_n$ for $n>0$. If $r\ge 1$, then
$$x^r=2^{1-r}\sum_{k=0}^{[r/2]}\left(\begin{matrix}r\\k\end{matrix}\right)\widetilde T_{r-2k}(x) $$
The binomial formula implies
$$(1-x)^m=2^{1-r}\sum_{r\ge 0}\left(\begin{matrix}m\\r
\end{matrix}\right)(-1)^rx^r $$
$$=\sum_{r=0}^m\left(\begin{matrix}m\\r
\end{matrix}\right)(-1)^r2^{1-r}\sum_{k=0}^{[r/2]}\left(\begin{matrix}r\\k\end{matrix}\right)\widetilde T_{r-2k}(x) $$
Upon substituting $x=cos(\theta)$, this implies
$$(1-cos(\theta))^m=2\sum_{r=0}^m\sum_{k=0}^{[r/2]}\left(\begin{matrix}m\\r
\end{matrix}\right)\left(\begin{matrix}r\\k\end{matrix}\right)(\frac{-1}{2})^r\widetilde T_{r-2k}(cos(\theta)) $$
Since $T_{r-2k}(cos(\theta))=cos((r-2k)\theta)$, $r-2k\ne 0$,
$$(1-cos(\theta))^m=2Re\sum_{r=0}^m\sum_{k=0}^{[r/2]'}\left(\begin{matrix}m\\r
\end{matrix}\right)\left(\begin{matrix}r\\k\end{matrix}\right)(\frac{-1}{2})^rz^{r-2k} $$
where the prime indicates that we have to halve the sum when $r-2k=0$.
Therefore (incorporating the normalization factor for $F_m$)
$$F_m(z)=1+\sqrt{\pi}2^{1-m}\frac{\Gamma(m+ 1)}{\Gamma(m+\frac 12)}\sum\left(\begin{matrix}m\\r
\end{matrix}\right)\left(\begin{matrix}r\\k\end{matrix}\right)(\frac{-1}{2})^rz^{r-2k} $$
where the sum is over pairs $r,k\ge 0$, $r-2k>0$.

Now suppose we make the change of variables $n=r-2k$.
$$F_m(z)=1+\sqrt{\pi}2^{1-m}\frac{\Gamma(m+ 1)}{\Gamma(m+\frac 12)}\sum_{n=1}^{\infty}\sum_{k=0}^{\infty}\left(\begin{matrix}m\\n+2k
\end{matrix}\right)\left(\begin{matrix}n+2k\\k\end{matrix}\right)(\frac{-1}{2})^{n+2k}z^{n} $$

\end{proof}

\begin{examples} In case someone recognizes something!:
$$F_1(z)=1-z $$
$$F_2(z)1-\frac 43 z+\frac 13 z^2$$
$$F_3=1-\frac{15}{10}z+\frac{6}{10}z^2-\frac{1}{10}z^3 $$
$$F_4=1-\frac{56}{35}z+\frac{48}{35}z^2-\frac{8}{35}z^3+\frac{1}{36}z^4 $$
\end{examples}

\subsection{Moments of $\mu$}

To compute the moments (Fourier coefficients) of $d\mu = \frac{1}{\mathfrak Z}(1 - \cos(\theta))^m d\theta / 2\pi$ write
$$d\mu = \frac{1}{\mathfrak Z} 2^{-m} \left(2 - e^{i\theta} - e^{-i\theta}\right)^m \frac{d\theta}{2\pi}$$
Then
\begin{equation}
    c_n =\frac{1}{\mathfrak Z} 2^{-m} \int e^{-in\theta}  \left(2 - e^{i\theta} - e^{-i\theta}\right)^m \frac{d\theta}{2\pi}
\end{equation}
Expand
$$
    \left(2 - e^{i\theta} - e^{-i\theta}\right)^m = \sum_{k=0}^m  {m \choose k } 2^{m-k} (-1)^k \left(e^{i\theta} + e^{-i\theta}\right)^k = \sum_{k=0}^m \sum_{l=0}^k {m \choose k } {k \choose l } 2^{m-k} (-1)^k e^{i(k-2l) \theta}
$$
and rewrite the sum in terms of $r=k-2l$
\begin{equation}=
    \left(2 - e^{i\theta} - e^{-i\theta}\right)^m = \sum_{r=-m}^m \sum_{l=0}^{ \lfloor \frac{m-r}{2}  \rfloor} {m \choose r+2l } {r+2l \choose l } 2^{m-r-2l} (-1)^{r+2l} e^{ir \theta}
\end{equation}

This implies
\begin{theorem}
The Fourier coefficients of the measure $\frac{1}{\mathfrak Z} (1 - \cos(\theta))^m d\theta / 2\pi$ on $S^1$ are given by
\begin{equation}
    c_n =  \sqrt{\pi}2^{-m}\frac{\Gamma(m+ 1)}{\Gamma(m+\frac 12)} \sum_{l=0}^{ \lfloor \frac{m-n}{2}  \rfloor} {m \choose n+2l } {n+2l \choose l }  (\frac{-1}{2})^{n+2l}
\end{equation} $n=1,2,...$
\end{theorem}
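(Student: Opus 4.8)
The plan is to read off $c_n$ directly from the finite Fourier expansion of $(2-e^{i\theta}-e^{-i\theta})^m$ already displayed above, using nothing more than orthogonality of the characters $e^{ir\theta}$ on $S^1$. First I would recall that
$$c_n = \frac{1}{\mathfrak Z}2^{-m}\int_0^{2\pi} e^{-in\theta}\,(2-e^{i\theta}-e^{-i\theta})^m\,\frac{d\theta}{2\pi},$$
and substitute the expansion
$$(2-e^{i\theta}-e^{-i\theta})^m = \sum_{r=-m}^m \Bigl(\sum_{l=0}^{\lfloor (m-r)/2\rfloor}\binom{m}{r+2l}\binom{r+2l}{l}2^{m-r-2l}(-1)^{r+2l}\Bigr)e^{ir\theta}.$$
Since $\int_0^{2\pi}e^{i(r-n)\theta}\,\frac{d\theta}{2\pi}=\delta_{r,n}$, integration annihilates every term except $r=n$ (for $n\ge 1$ the $r=0$ term, where a halving convention would be needed, does not occur), leaving
$$c_n = \frac{1}{\mathfrak Z}\,2^{-m}\sum_{l=0}^{\lfloor (m-n)/2\rfloor}\binom{m}{n+2l}\binom{n+2l}{l}2^{m-n-2l}(-1)^{n+2l}.$$

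Next I would absorb the powers of two and the sign into a single factor: because $2^{-m}\cdot 2^{m-n-2l}(-1)^{n+2l}=(-1/2)^{n+2l}$, the expression collapses to
$$c_n = \frac{1}{\mathfrak Z}\sum_{l=0}^{\lfloor (m-n)/2\rfloor}\binom{m}{n+2l}\binom{n+2l}{l}\Bigl(\tfrac{-1}{2}\Bigr)^{n+2l}.$$
Note that this sum is empty once $n>m$, so $c_n=0$ there, consistent with $(1-\cos\theta)^m$ being a trigonometric polynomial of degree $m$.

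Finally I would pin down the constant. Since $\mu$ is a probability measure, $\mathfrak Z=\int_0^{2\pi}(1-\cos\theta)^m\,\frac{d\theta}{2\pi}$, and the Wallis/Beta integral carried out in the proof of the previous theorem, $\int_0^{2\pi}(1-\cos\theta)^m\,d\theta = 2^{1+m}\sqrt{\pi}\,\frac{\Gamma(m+\frac12)}{\Gamma(m+1)}$, yields $\tfrac{1}{\mathfrak Z}=\sqrt{\pi}\,2^{-m}\,\frac{\Gamma(m+1)}{\Gamma(m+\frac12)}$. Substituting this into the displayed formula for $c_n$ gives exactly the claimed expression. There is no genuine obstacle here, as the expansion is already in hand; the only point that demands care is the bookkeeping of the factors of $2\pi$ and of $2$ hidden in the normalization, which is precisely where the $2^{-m}$ prefactor and the clean $(-1/2)^{n+2l}$ weight emerge.
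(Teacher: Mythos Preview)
Your argument is correct and follows exactly the paper's approach: the paper also writes $(1-\cos\theta)^m=2^{-m}(2-e^{i\theta}-e^{-i\theta})^m$, expands by the binomial theorem, re-indexes via $r=k-2l$, and then the theorem is read off from orthogonality of the characters together with the value of $\mathfrak Z$ computed earlier. Your write-up in fact spells out the orthogonality step and the evaluation of $1/\mathfrak Z$ more explicitly than the paper does.
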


In comparison with the other explicit examples in \cite{Simon1}, we do lack transparent expressions for the orthogonal polynomials for $\mu$.

\end{document}